\DeclarePairedDelimiter\abs{\lvert}{\rvert}
\newcommand{\ot}{\otimes}
\renewcommand{\epsilon}{\varepsilon}
\newcommand{\negl}{\mathsf{negl}}
\newcommand{\PRF}{\textsf{PRF}}
  \theoremstyle{definition}
\newtheorem{theorem}{Theorem}[section]
\newtheorem{definition}[theorem]{Definition}
\newtheorem{fact}[theorem]{Fact}
\newtheorem{lemma}[theorem]{Lemma}
\newcommand{\adv}{\mathbf{A}}
\newcommand{\alice}{\mathbf{A}}
\newcommand{\bob}{\mathbf{B}}
\newlength\myindent
\newcommand{\ct}{\textsf{ct}}
\newcommand{\pt}{\textsf{pt}}
\newcommand{\pk}{\textsf{pk}}
\newcommand{\sk}{\textsf{sk}}
\newcommand{\Hilb}{\mathcal{H}}
 \newcommand{\CC}{\mathbb{C}}
 \newcommand{\Gen}{\textsf{Gen}}
\newcommand{\Enc}{\textsf{Enc}}
\newcommand{\Dec}{\textsf{Dec}}
\newcounter{mycomment}
\newcommand{\comm}[2]{
\refstepcounter{mycomment}
{%
    \todo[author = \textbf{#1~\#~\themycomment}, color={red!100!green!35}, fancyline, size = \footnotesize]{%
        #2}%
    }
}
\newcommand{\MW}[1]{\comm{MW}{#1}}
\renewcommand{\MW}[1]{}
\begin{document}
%
\title{A Simple Construction of Quantum Public-Key Encryption from Quantum-Secure One-Way Functions}
\date{}
%

%
\author[1]{Khashayar Barooti}
\author[2]{Giulio Malavolta}
\author[3]{Michael Walter}
\affil[1]{EPFL, Lausanne, Switzerland}
\affil[2]{Max-Planck Institute in Security and Privacy, Bochum, Germany}
\affil[3]{Ruhr-Universit\"{a}t Bochum, Bochum, Germany}

\maketitle
\begin{abstract}
Quantum public-key encryption [Gottesman; Kawachi et al., Eurocrypt'05] generalizes public-key encryption (PKE) by allowing the public keys to be quantum states.
Prior work indicated that quantum PKE can be constructed from assumptions that are potentially weaker than those needed to realize its classical counterpart.
In this work, we show that quantum PKE can be constructed from any quantum-secure one-way function.
In contrast, classical PKE is believed to require more structured assumptions.
Our construction is simple, uses only classical ciphertexts, and satisfies the strong notion of \emph{CCA security}.
\end{abstract}

\section{Introduction}
Quantum public-key encryption was proposed by Gottesman~\cite{gottesman} and Kawachi et al.~\cite{DBLP:journals/iacr/KawachiKNY06} as a generalization of the standard notion of public-key encryption, allowing public keys to be quantum states.
More specifically, this primitive allows Alice to locally generate (many copies of) a state~$\ket{\pk}$ and upload it to some certificate authority.
Later on, Bob can query the certificate authority to retrieve a copy of~$\ket{\pk}$ and use it to send a private message to Alice.
\MW{Add a sentence saying that one can consider both the setting where Bob's ciphertext is classical or quantum.}
Similarly to the classical setting, quantum PKE assumes that the certificate authority provides Bob with the correct information (in this case the state~$\ket{\pk}$), but does not otherwise make any assumption on the behavior of the certificate authority, who could try to learn the secret key of Alice in some arbitrary way.
However, contrary to the classical case, since quantum states cannot in general be copied, one has to assume that Alice uploads many copies of~$\ket{\pk}$, if she wants to establish a secure channel with multiple parties.
\MW{Or if Bob wants to send many messages (although here one can use the trick in \cref{subsec:context}, so good to forward reference that discussion from here).}
In spite of this limitation, quantum PKE is still an interesting object to study:
(i)~Because of the use of quantum information, quantum PKE may be realizable from weaker computational assumptions than standard (classical) PKE, or perhaps even unconditionally.
(ii)~In contrast to quantum key-distribution (QKD) protocols~\cite{DBLP:journals/sigact/BennettB87}, which require more interaction, quantum PKE preserves the interaction pattern of classical PKE, and thus enables \emph{round-optimal} secure communication.
Yet, the current state of affairs of quantum PKE leaves open many questions regarding the minimal assumptions needed to construct this primitive.
Existing proposals~\cite{DBLP:journals/iacr/KawachiKNY06} rely on ad-hoc assumptions that are seemingly insufficient for classical PKE, but do not give a clear complexity-theoretic characterization of this primitive.
There are even proposals of unconditionally secure quantum PKE~\cite{gottesman}, although without security proofs.
We note that conjecturing unconditional security for quantum PKE is at the very least plausible -- after all, QKD does achieve information-theoretic security (assuming authenticated channels).

\subsection{Our results}

In this work, we show that quantum-secure one-way functions are sufficient to build quantum public-key encryption schemes.
While elementary, our construction satisfies the strong notion of \emph{CCA security}, and the ciphertexts are classical.
Our results should be contrasted with the case of classical PKE, where one-way functions are widely believed to be insufficient for realizing this primitive, and in fact black-box separations are known~\cite{DBLP:conf/stoc/ImpagliazzoR89}.
Thus, our result also implies a black-box separation between classical and quantum PKE.

\begin{theorem}[Informal]\label{thm:informal positive}
If quantum-secure one-way functions exist, then there exists a (CCA-secure) quantum PKE scheme with classical ciphertexts.
\end{theorem}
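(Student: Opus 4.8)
The plan is to build a \emph{key-encapsulation mechanism} (KEM) whose public key is a quantum state, and then lift it to a full CCA-secure PKE via the standard KEM--DEM paradigm. First I would invoke the known fact that quantum-secure one-way functions imply quantum-query-secure pseudorandom functions $\{F_k : \nbits \to \nbits\}_k$ (e.g.\ via the GGM construction, whose security against quantum query adversaries is known). The secret key of the KEM is a PRF key $k$, and the public key consists of (polynomially many copies of) the state
\[
  \kett{\psi_k} \;=\; \frac{1}{\sqrt{2^n}} \sum_{x \in \nbits} \kett{x}\,\kett{F_k(x)}.
\]
Crucially, each copy can be prepared with a single quantum query to $F_k$ (apply the oracle $\kett{x}\kett{0}\mapsto\kett{x}\kett{F_k(x)}$ to a uniform superposition), so the reductions below can simulate the public key given only oracle access to $F_k$. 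To encapsulate, Bob measures one copy of $\kett{\psi_k}$ in the computational basis, obtaining a uniformly random pair $(x,\kappa)$ with $\kappa=F_k(x)$; he publishes $x$ as the classical encapsulation and keeps $\kappa$ as the session key. To decapsulate, Alice simply recomputes $\kappa=F_k(x)$ from $x$ and $k$.

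Correctness is immediate, since measuring $\kett{\psi_k}$ always yields a consistent pair $(x,F_k(x))$ and Alice evaluates $F_k$ only in the forward direction. The heart of the argument is a lemma stating that the encapsulated key is pseudorandom from the point of view of an adversary holding many copies of the public key together with the encapsulation $x$; that is,
\[
  \bigl(\kett{\psi_k}^{\otimes q},\, x,\, F_k(x)\bigr) \;\approx_c\; \bigl(\kett{\psi_k}^{\otimes q},\, x,\, u\bigr),
\]
for uniformly random $u$. I would prove this in two moves. The first move replaces $F_k$ by a truly random function $f$, which is sound by quantum-query PRF security because the $q$ copies of the public key and the pair $(x,F_k(x))$ are jointly preparable with $q{+}1$ quantum queries to $F_k$. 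The second, information-theoretic move observes that reprogramming $f$ at the single point $x$ to a fresh uniform value perturbs each copy $\kett{\psi_f}$ in trace distance by only $O(2^{-n/2})$, since $\kett{\psi_f}$ carries amplitude $2^{-n/2}$ on the branch whose first register equals $x$; over all $q$ copies the total perturbation is $O(q\,2^{-n/2})$, which is negligible. Swapping the old and reprogrammed values then shows that $f(x)$ is statistically indistinguishable from an independent uniform string. It is exactly here that quantumness is essential: measuring distinct copies yields \emph{fresh} pairs $(x,F_k(x))$ unknown to everyone else, whereas a classical list of input--output pairs would be shared by the adversary, collapsing the scheme.

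With the KEM in hand, I would obtain CCA-secure PKE by composing it with a CCA-secure (authenticated) symmetric-key scheme, which also follows from one-way functions: to encrypt a message, Bob runs encapsulation to get $(x,\kappa)$ and sends $\bigl(x,\Enc_\kappa(\pt)\bigr)$, and Alice decrypts by recomputing $\kappa=F_k(x)$ and running $\Dec_\kappa$. The resulting ciphertexts are entirely classical, and the KEM--DEM composition theorem upgrades indistinguishability of the KEM and authenticity of the DEM to CCA security of the PKE; the decryption oracle can be simulated because any query reusing the challenge encapsulation $x^\ast$ with a mauled symmetric part is rejected by the DEM's authenticity, while queries with $x\neq x^\ast$ concern an independent key. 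A single copy of the public key moreover suffices for Bob to send arbitrarily many messages, by using $\kappa$ as a seed.

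I expect the main obstacle to be the security lemma above, and in particular making the two reductions interact cleanly with the fact that the public key is a many-copy quantum state: the PRF reduction must serve the challenger's requests by preparing superposition states with quantum queries rather than by classical evaluation, and the reprogramming step must be argued against an adversary that may manipulate all $q$ copies jointly and coherently rather than measuring them, so the $O(q\,2^{-n/2})$ trace-distance bound has to hold at the level of the global state. Extending this from IND-CPA to full CCA---i.e.\ showing that the validity/decryption oracle leaks nothing about $F_k(x^\ast)$ beyond a negligible guessing probability---is the other delicate point, for which the authenticity of the DEM together with a careful reprogramming-at-$x^\ast$ argument should suffice.
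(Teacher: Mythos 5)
Your proposal is sound and is built around exactly the same quantum object as the paper --- the public key $\sum_x \ket{x}\ket{\PRF_k(x)}$, measured by the sender to obtain a correlated pair $(x,\PRF_k(x))$ that nobody else can reproduce --- but you package and prove it differently. The paper encrypts a single bit directly by publishing \emph{two} such states under keys $k_0,k_1$ and letting the ciphertext $(x,y)$ reveal which state was measured; decryption tests $y\stackrel{?}{=}\PRF_{k_b}(x)$, which is why the range is stretched to $3\lambda$ bits (to make the two ranges disjoint except with negligible probability) and why correctness is only statistical. Its CCA proof runs six hybrids and invokes the one-way-to-hiding lemma to argue that reprogramming $f$ at the challenge point $x$ is undetectable given the superposition copies of the public key plus the classical decryption queries. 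You instead extract a KEM from a single such state, which makes correctness perfect and needs no range expansion, and you replace the O2H invocation with a direct statistical argument: reprogramming $f$ at one point moves each pure state $\ket{\psi_f}$ by only $O(2^{-n/2})$ in trace distance, so the $q$ copies move by $O(q2^{-n/2})$, and the pre-challenge decryption queries hit the uniformly random $x^\ast$ with probability $Q/2^n$ --- essentially the same union bound the paper feeds into $P_{\text{guess}}$, but obtained more elementarily. Your KEM--DEM composition then buys something the paper explicitly leaves open: it yields CCA security for arbitrary-length messages directly (the paper only proves bit-CCA and defers the quantum analogue of the ``bit encryption is CCA-complete'' theorem to future work), at the cost of additionally needing a one-time authenticated DEM from one-way functions and a careful ordering of hybrids so that DEM authenticity can reject post-challenge queries that reuse $x^\ast$. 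Both delicate points you flag at the end are real, and both are resolved by standard techniques; neither constitutes a gap.
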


\subsection{Open problems}
Our results demonstrate that in fact quantum PKE can be realized only assuming the existence of one-way functions.
However, in contrast to classical cryptography, in quantum cryptography one-way functions are not considered to be a minimal assumption.
The work of Kretschmer~\cite{DBLP:conf/tqc/Kretschmer21} shows that there exists an oracle relative to which one-way functions do not exist, but pseudorandom states~\cite{DBLP:conf/crypto/JiL018} do.
Thus a question left open by our work is whether quantum PKE can be constructed from presumably weaker assumptions than quantum-secure one-way functions.

\subsection{Quantum PKE in context}\label{subsec:context}
Although the notion of quantum PKE is not an original contribution of our work, we feel compelled to discuss its relation to related primitives in classical and quantum cryptography.
While quantum PKE mimics the interaction pattern of traditional (classical) PKE, the presence of quantum information in the public keys introduces some important conceptual differences.
As alluded at earlier, an important difference is that public keys can no longer be copied.
This means that Alice (the receiver) must upload many copies of the quantum state~$\ket{\pk}$ to the certificate authority, and each copy gets ``consumed'' once Bob uses it to encrypt a message for Alice.
One possible way to mitigate this limitation is to let Alice and Bob use their first message to exchange a secret key, and then continue the remainder of the interaction using standard symmetric encryption.
In this way, only one state is used up in this interaction.

Another important point of having a quantum state for a public key is that it becomes less obvious how to check whether the certificate authority sent us the ``correct'' public key.
This is not a problem unique to the quantum setting, since also in the classical case the certificate authority must be trusted to supply the public keys correctly.
One standard approach to address this problem is to have multiple authorities storing the same keys, so that one can check their honesty by just comparing the public keys that we receive.
With quantum information, the same idea can be implemented by using the SWAP test, which allows comparing two unknown quantum states~\cite{buhrman2001quantum}.
This problem and its solution based on the SWAP test were already observed by Gottesman~\cite{gottesman}.

Quantum PKE can also be compared with quantum key distribution (QKD).
On the one hand, quantum PKE has a single round of interaction (from Alice to Bob and back), thus satisfying a \emph{stronger} notion of efficiency.
On the other hand, quantum PKE requires Alice to keep a long-term secret key that she would use for all subsequent communications, whereas in QKD there is no requirement to keep a state across different executions.
More significantly, in quantum PKE the certificate authority is guaranteed to correctly deliver the public key~$\ket{\pk}$ from Alice to Bob, whereas in QKD the eavesdropper can behave arbitrarily during all rounds of the protocol.
Thus, quantum PKE satisfies a \emph{weaker} security notion, although the difference is slightly more nuanced than what appears superficially -- while the certificate authority is required to deliver the correct state to Bob, it can do arbitrary computations locally to try to recover the secret key.
In fact, the security definition provides it with many copies of~$\ket{\pk}$ that it could potentially use to learn the secret key. A more thorough discussion on this aspect, along with the modelling choices for the attacker, is given in~\cref{sec:quantum_PKE}.

Given the above discussion it is natural to ask whether one can generalize the notion of quantum PKE to allow for quantum secret keys. If one allows the public key and the secret key to form a (possibly entangled) quantum state, then quantum PKE can be realized unconditionally via quantum teleportation~\cite{TP}.

\section{Preliminaries}\label{sec:prelims}
In this section, we provide some preliminary background on quantum mechanics and quantum information.
The state space of a quantum system can be characterized by a Hilbert space~$\Hilb$.
For a more in-depth introduction to quantum information, we refer the reader to \cite{DBLP:books/daglib/0046438}.
The state of a machine can be represented as a \emph{density matrix}, a positive semi-definite operator of trace one, on $\Hilb$.
We call a state \emph{pure} if this operator is a rank one projector, i.e., equal to $\proj{\psi}$, where $\ket{\psi}\in \Hilb$ is a unit vector. This allows us to represent pure states as unit vectors of $\Hilb$ instead of density operators.

When $\Hilb = (\CC^2)^{\ot n}$, the quantum system consists of~$n$ \emph{quantum bits} or \emph{qubits}.
The standard product basis~$\ket x = \ket{x_1} \ot \dots \ot \ket{x_n}$ of~$\Hilb$ is labeled by bit strings~$x \in \{0,1\}^n$ and is known as the \emph{computational basis}.
For sake of convenience, we often leave out $\ot$ and we also write $\ket{0}$ instead of $\ket{0}\otimes \dots \otimes \ket{0}$.
Every pure state can be represented as $\ket\psi = \sum_{x\in\{0,1\}^n} \alpha_x \ket x$, where the~$\alpha_x$ are called amplitudes and satisfy $\sum_x \abs{\alpha_x}^2 = 1$.
When measuring the qubits of a quantum system in this state, the probability of the measurement outcome being~$x$ is given by $\abs{\alpha_x}^2$.
When the amplitudes are all equal, i.e., the state is at a uniform superposition, we drop the normalization~$2^{-n/2}$ and simply write~$\sum_{x \in \{0,1\}^n} \ket x$.
Next, we state a well-known fact about the quantum evaluation of classical circuits.


\begin{fact}\label{thm:quantum}
Let $f\colon\{0,1\}^n \to \{0,1\}^m$ be a function which is efficiently computable by a classical circuit.
Then there exists a unitary~$U_f$ on $(\CC^2)^{\ot n+m}$ which is efficiently computable by a quantum circuit (possibly using ancillas) such that, for all~$x \in\{0,1\}^n$ and~$y\in\{0,1\}^m$,
\begin{align*}
    U_f\colon \ket{x}\ket{y} \mapsto \ket{x}\ket{y \oplus f(x)}.
\end{align*}
\end{fact}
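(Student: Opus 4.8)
The plan is to reduce the quantum construction to the classical circuit for~$f$ via reversible computation, and then clean up the ancillas using the standard compute--copy--uncompute trick. First I would recall that any classical circuit built from a universal gate set (say, NAND together with fan-out) can be simulated by a reversible circuit over Toffoli gates: each irreversible gate is replaced by a Toffoli gate acting on one or more fresh ancilla bits initialized to~$\ket{0}$, at the cost of producing some ``garbage'' bits that record intermediate values. Concretely, this yields a reversible circuit~$C$ such that, on ancillas initialized to zero,
\begin{align*}
C\colon \ket{x}\ket{0}_{\text{anc}} \mapsto \ket{x}\ket{f(x)}\ket{g(x)}_{\text{anc}},
\end{align*}
for some garbage function~$g$, where the number of Toffoli gates is polynomial in the size of the classical circuit for~$f$. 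Next I would observe that a Toffoli gate permutes the computational basis via $\ket{b_1}\ket{b_2}\ket{b_3} \mapsto \ket{b_1}\ket{b_2}\ket{b_3 \oplus b_1 b_2}$, and is therefore realized by a permutation matrix, which is unitary and implementable by a constant-size quantum circuit over any universal gate set. Hence~$C$, being a composition of Toffoli gates, lifts to an efficiently computable unitary on the enlarged Hilbert space.

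The remaining obstacle---and the crux of the argument---is that~$C$ by itself does not implement the target map: it leaves behind the garbage register~$\ket{g(x)}$, and it writes~$f(x)$ into a fresh register rather than XORing it into the given target register holding~$\ket{y}$. To fix this I would use Bennett's uncomputation. After running~$C$, apply a bank of~$m$ CNOTs to copy~$f(x)$ into the target register, and then apply~$C^{-1}$ (again an efficient Toffoli circuit, since each Toffoli is its own inverse) to the input and ancilla registers. The combined effect is
\begin{align*}
\ket{x}\ket{0}_{\text{anc}}\ket{y} \mapsto \ket{x}\ket{f(x)}\ket{g(x)}_{\text{anc}}\ket{y} \mapsto \ket{x}\ket{f(x)}\ket{g(x)}_{\text{anc}}\ket{y \oplus f(x)} \mapsto \ket{x}\ket{0}_{\text{anc}}\ket{y \oplus f(x)},
\end{align*}
where the last step uses that~$C^{-1}$ touches only the input and ancilla registers and returns them to~$\ket{x}\ket{0}_{\text{anc}}$, independently of the value stored in the target.

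Because the ancilla register begins and ends in the fixed state~$\ket{0}$, it factors out, and the induced map on the~$(n+m)$-qubit system is exactly the claimed~$U_f\colon \ket{x}\ket{y} \mapsto \ket{x}\ket{y\oplus f(x)}$. Finally I would note that unitarity on all of~$(\CC^2)^{\ot n+m}$ follows at once because this map merely permutes the computational basis---indeed it is an involution, being its own inverse---and efficiency follows since the total gate count is dominated by the two applications of the polynomial-size reversible circuit~$C$ together with~$m$ CNOTs.
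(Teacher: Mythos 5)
Your proposal is correct. The paper states this as a well-known fact and offers no proof of its own, and your argument --- reversible simulation of the classical circuit via Toffoli gates followed by Bennett's compute--copy--uncompute trick to clear the garbage register --- is precisely the standard argument that the paper is implicitly invoking, with the one essential subtlety (that $C^{-1}$ acts only on the input and ancilla registers, so the target register retains $y \oplus f(x)$ while the ancillas return to the all-zero state and factor out) correctly identified and handled.
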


Next, we recall the one-way to hiding lemma~\cite{O2H}.
\begin{lemma}[One-way to hiding]\label{lemma:o2h}
Let $G,H: X\to Y$ be random functions and $S \subset X$ an arbitrary set with the condition that $\forall x\notin S, G(x) = H(x)$, and let $z$ be a random bitstring. Further, let $\alice^H(z)$ be a quantum oracle algorithm that queries $H$ with depth at most $d$. Define $\bob^H(z)$ to be an algorithm that picks $i\in[d]$ uniformly, runs $\alice^H(z)$ until just before its $i^{th}$ round of queries to $H$ and measures all query input registers in the computational basis and collects them in a set $T$. Let
\begin{align*}
    P_{\text{left}} = \Pr[ 1\gets \alice^H(z)], \quad
    P_{\text{right}} = \Pr[1\gets \alice^G(z)], \quad
    P_{\text{guess}} = \Pr[S\cap T \neq \emptyset | T\gets \bob^H(z)].
\end{align*}
Then we have that
\begin{align}
    |P_{\text{left}} - P_{\text{right}}| \leq 2d \sqrt{P_{\text{guess}}} \quad\text{and}\quad |\sqrt{P_{\text{left}}} - \sqrt{P_{\text{right}}}| \leq 2d \sqrt{P_{\text{guess}}}
\end{align}
\end{lemma}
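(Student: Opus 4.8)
The plan is to prove a single bound on the Euclidean distance between the two final state vectors and then transfer it to both output quantities. First I would fix $G,H,z$ and model $\alice^H(z)$ in the query-circuit formalism: it alternates fixed unitaries $U_0,\dots,U_d$ with $d$ rounds of oracle calls, so its final state is $\ket{\psi^H} = U_d\,O_H\,U_{d-1}\cdots O_H\,U_0\ket{\psi_0}$, where $O_H$ applies the oracle $\ket{x}\ket{y}\mapsto\ket{x}\ket{y\oplus H(x)}$ in parallel to all query registers of a single round and $\ket{\psi_0}$ depends on $z$; the state $\ket{\psi^G}$ is defined identically with $O_G$. The final output measurement is a fixed operator $M$ with $\|M\|\le 1$ (the projector onto outcome $1$), so that $P_{\text{left}}=\|M\ket{\psi^H}\|^2$ and $P_{\text{right}}=\|M\ket{\psi^G}\|^2$ for each fixed $G,H,z$.

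The core is a hybrid argument over the $d$ rounds. For $0\le i\le d$ let $\ket{\Psi_i}$ be the state obtained by answering the first $i$ rounds with $H$ and the rest with $G$, so $\ket{\Psi_0}=\ket{\psi^G}$ and $\ket{\Psi_d}=\ket{\psi^H}$. Consecutive hybrids differ only in the oracle used at round $i$, hence $\ket{\Psi_i}-\ket{\Psi_{i-1}}=W_i(O_H-O_G)\ket{\phi_i}$ for a unitary $W_i$ and $\ket{\phi_i}$ the state just before round $i$ produced using $H$ for all earlier rounds. Crucially, $\ket{\phi_i}$ is exactly the state $\bob^H(z)$ holds when it picks index $i$. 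The key observation is that $O_H$ and $O_G$ act identically whenever every query input of the round lies outside $S$; writing $P_S$ for the projector onto ``at least one round-$i$ input lies in $S$'' we get $(O_H-O_G)=(O_H-O_G)P_S$, and since a difference of two unitaries has operator norm at most $2$, $\|(O_H-O_G)\ket{\phi_i}\|\le 2\|P_S\ket{\phi_i}\|$. By construction $\|P_S\ket{\phi_i}\|^2=p_i$, the probability that $\bob^H(z)$'s round-$i$ measurement hits $S$. Telescoping and Cauchy--Schwarz then give, for each fixed $G,H,z$,
\[
\|\ket{\psi^H}-\ket{\psi^G}\|\le\sum_{i=1}^d 2\sqrt{p_i}\le 2\sqrt{d}\sqrt{\sum_{i=1}^d p_i}.
\]

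Next I would transfer this to the outputs. For the square-root bound, the reverse triangle inequality and $\|M\|\le 1$ give $\bigl|\,\|M\ket{\psi^H}\|-\|M\ket{\psi^G}\|\,\bigr|\le\|\ket{\psi^H}-\ket{\psi^G}\|$. For the probability bound I would avoid multiplying by $\sqrt{P_{\text{left}}}+\sqrt{P_{\text{right}}}\le 2$ (which loses a factor of two) and instead pass through the trace distance: $|P_{\text{left}}-P_{\text{right}}|=|\trace(M^\dagger M(\proj{\psi^H}-\proj{\psi^G}))|\le\frac12\|\proj{\psi^H}-\proj{\psi^G}\|_1=\sqrt{1-|\langle\psi^H|\psi^G\rangle|^2}\le\|\ket{\psi^H}-\ket{\psi^G}\|$, so the same state-distance bound controls both quantities with the same constant. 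Finally, since $G,H$ are random (coupled to agree outside $S$) and $\ket{\phi_i}$ depends only on $H$, I would average the per-instance bounds: Jensen's inequality (concavity of $\sqrt{\cdot}$) turns $\mathbb{E}[2\sqrt{d}\sqrt{\sum_i p_i}]$ into $2\sqrt{d}\sqrt{\sum_i\mathbb{E}[p_i]}=2\sqrt{d}\sqrt{d\,P_{\text{guess}}}=2d\sqrt{P_{\text{guess}}}$, using $\sum_i\mathbb{E}[p_i]=d\,P_{\text{guess}}$; for the square-root bound the same conclusion follows by the triangle inequality in $L^2$ of the probability space together with $\|\ket{\psi^H}-\ket{\psi^G}\|^2\le 4d\sum_i p_i$.

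The main obstacle is keeping the constant at $2d$ rather than $4d$ in the probability statement: the naive route through $|\sqrt{P_{\text{left}}}-\sqrt{P_{\text{right}}}|$ costs an extra factor of two, so one must instead use the trace-distance identity $\frac12\|\proj{\psi^H}-\proj{\psi^G}\|_1=\sqrt{1-|\langle\psi^H|\psi^G\rangle|^2}$. A secondary point requiring care is the parallel-query (depth) model: a single round may contain many simultaneous queries, so one must verify that $O_H$ and $O_G$ agree on the subspace where all inputs avoid $S$, which is what makes the single projector $P_S$ — matching $\bob$'s measurement of the entire set $T$ of round-$i$ inputs — the correct step-$i$ error term. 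The averaging over the jointly sampled random oracles, via Jensen and the $L^2$ triangle inequality, is then routine once the per-instance bound is established.
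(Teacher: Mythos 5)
The paper does not actually prove this lemma --- it is imported verbatim from the cited reference~\cite{O2H} --- so there is no in-paper proof to compare against. Your argument is correct and is essentially the standard direct hybrid proof of the one-way-to-hiding lemma for parallel (depth-$d$) queries: the telescoping over per-round oracle swaps, the observation that $(O_H-O_G)=(O_H-O_G)P_S$ with $P_S$ the projector onto ``some query input in $S$'' (which is exactly what $\bob$'s round-$i$ measurement detects), the trace-distance step $|P_{\text{left}}-P_{\text{right}}|\leq\sqrt{1-|\langle\psi^H|\psi^G\rangle|^2}\leq\|\ket{\psi^H}-\ket{\psi^G}\|$ that keeps the constant at $2d$ rather than $4d$, and the Jensen/$L^2$ averaging over the jointly sampled oracles are all handled correctly. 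The reference itself derives this bound via semi-classical (punctured) oracles; your direct hybrid route is an equally valid and arguably more elementary derivation of the same statement.
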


\comm{GM}{@Khashayar: Can you check if we need all of the stuff above? We should remove the notions that we don't use here.}

\subsection{Quantum-secure pseudorandom functions}\label{subsec:prf}
Our construction relies on a \emph{pseudorandom function} (PRF)~\cite{DBLP:journals/jacm/GoldreichGM86}.
This is a keyed function, denoted~$\PRF$, that can be evaluated in polynomial time satisfying a certain security property.
In this work we require $\PRF$ to be \emph{quantum-secure}, which, loosely speaking, says that an adversary with oracle access to $\PRF$ cannot distinguish it from a truly random function, even given superposition queries.
It is known that quantum-secure PRFs can be constructed from any quantum-secure one-way function~\cite{DBLP:journals/jacm/Zhandry21}.

\begin{definition}[Quantum-secure PRF]
We say that a keyed function~$\PRF$ is a \emph{quantum-secure pseudorandom function (PRF)} if, for any quantum polynomial time (QPT) adversary~$\adv$, we have
\begin{align*}
  \abs*{
    \Pr\mleft[ 1 \gets \adv(1^\lambda)^{\PRF_k} \mright]
  - \Pr\mleft[ 1 \gets \adv(1^\lambda)^{f} \mright]
  }
  \leq \mu(\lambda),
\end{align*}
where $k\xleftarrow{\$}\{0,1\}^\lambda$, $f$ is a truly random function, and the oracles can be accessed in superposition, that is, they implement the following unitaries
\[
  \ket{x}\ket{z} \xmapsto{U_{\PRF_k}} \ket{x}\ket{z\oplus \PRF_k(x)}
\quad\text{and}\quad
  \ket{x}\ket{z} \xmapsto{U_f} \ket{x}\ket{z\oplus f(x)},
\]
respectively.
\end{definition}

\subsection{Quantum public-key encryption}\label{sec:quantum_PKE}

We start by formalizing the notion of a quantum public-key encryption (PKE) scheme~\cite{DBLP:journals/iacr/KawachiKNY06}.
For convenience, we consider a PKE with binary message space~$\{0,1\}$, however the scheme can be generically upgraded to encrypt messages of arbitrary length, via the standard hybrid encryption paradigm. This transformation is known to preserve CPA security, by a standard hybrid argument. Classically, it is known that bit-encryption is also complete for CCA security~\cite{MS09}, however we leave the proof of such a statement in the quantum settings as ground for future work.


\begin{definition}[Quantum PKE]\label{def:pke}
A \emph{quantum public key encryption (PKE)} scheme is defined as a tuple~$\Gamma = (\Gen,\Enc,\Dec)$ such that:
\begin{itemize}
    \item $(\ket\pk,\sk) \gets \Gen(1^{\lambda})$ is a QPT algorithm which outputs a \emph{pure} quantum state $\ket\pk$ and a bit string $\sk$;
    \item $\ct \gets \Enc(\ket\pk, \pt)$ is a QPT algorithm which, given a bit $\pt\in\{0,1\}$, outputs a quantum state $\ct$ (that needs not be pure);
    \item $\pt \gets \Dec(\sk, \ct)$ is a QPT algorithm which outputs a bit $\pt\in\{0,1\}$.
\end{itemize}
If the ciphertext is classical then we call~$\Gamma$ a \emph{quantum PKE scheme with classical ciphertexts}.
In general, we say that~$\Gamma$ has \emph{correctness error~$\epsilon$} (which can be a function of~$\lambda$) if for all $\pt\in\{0,1\}$ we have
\begin{align*}
  \Pr\bigl[ \Dec(\sk , \Enc(\ket\pk,\pt)) = \pt \;:\; (\ket\pk,\sk) \gets \Gen(1^{\lambda})\bigr] \geq 1-\epsilon.
\end{align*}
Finally, we say~$\Gamma$ is \emph{correct} if it has negligible correctness error.
\end{definition}


Next we define the notion of CCA security~\cite{dolev1991non}. The version that is going to be relevant for us is the definition of CCA security for quantum PKE with classical ciphertext, which we present below. We also explicitly mention here that CCA security is not easy to define when ciphertexts are quantum states, and it is currently an open question to find the correct analogue of CCA security in the quantum settings~\cite{boneh2013secure,gagliardoni2016semantic,carstens2021relationships}. However, in this work, we will only consider CCA security for schemes with \emph{classical} ciphertexts, and therefore the decryption oracle is queried only classically. 


\begin{definition}[CCA Security]\label{def:cca}
We say $\Gamma = (\Gen,\Enc,\Dec)$ is \emph{CCA-secure}, if for any polynomial~$n=n(\lambda)$, and any QPT adversary $\adv = (\adv_0, \adv_1)$ we have
\begin{align*}
    \Pr\mleft[b\gets \adv_1^{\Dec^*(\sk, \cdot)}\mleft(\ct , \ket{\mathsf{st}}\mright):
    \begin{array}{l}
        (\ket\pk,\sk) \xleftarrow{} \Gen(1^\lambda)  \\
        \ket{\mathsf{st}} \xleftarrow{} \adv_0^{{\Dec(\sk, \cdot)}}(\ket\pk^{\otimes n})\\
         b \xleftarrow{\$} \{0,1\} \\
         \ct \xleftarrow{} \Enc(\ket\pk , b)
    \end{array}
    \mright] \leq 1/2 + \epsilon(\lambda)
\end{align*}
where $\epsilon$ is a negligible function, and the oracle $\Dec^*$ is defined as $\Dec$, except that it returns $\bot$ on input the challenge ciphertext $\ct$.
\end{definition}
A few remarks about the above definitions are in order.
First of all, we would like to stress that \cref{def:pke} \emph{crucially} imposes that the public key $\pk$ must be a pure state. If the public key was allowed to be the classical mixture, there there is a trivial scheme that satisfies this notion. Namely, the public key consists of a pair
\[
\mathsf{SK}.\Enc(0; r) \text{ and } \mathsf{SK}.\Enc(1; r)
\]
where $r$ is uniformly sampled and $\mathsf{SK}.\Enc$ is a secret-key encryption scheme. Formally, the public key is modelled as a classical mixture over the randomness $r$. The encryption algorithm would the just select one out of these two ciphertexts, depending on the input message. Note that this scheme is fully classical. This is the approach suggested by~\cite{MY22} and, as the authors also point out, it can hardly be considered a public key encryption scheme, since it does not even protect against a \emph{passive} certificate authority, who can break the scheme by simply looking at the public key.
While we do no formalize the notion of security against passive (possibly quantum) adversaries here, we believe that our definition, by forcing the state to be pure, models the intuition behind a semi-honest certificate authority who is trusted to deliver the correct state, but can otherwise do arbitrary computations on the public key.

An alternative definition that also captures this intuition is the notion of \emph{specious adversaries}~\cite{2QPC}, where the adversary is actually allowed to maul the public keys arbitrarily, conditioned on the fact that the state that it returns must be indistinguishable from the original one. Arguably, this is the ``right'' formalization of semi-honest adversaries in the quantum settings and in fact it is not satisfied by the trivial construction outlined above. This definition is slightly more general than~\cref{def:cca}, which on the other hand has the advantage to be simpler to state.


\section{CCA-secure quantum PKE from one-way functions}\label{sec:quantum_PKE_CCA}
In the following section, we describe our quantum PKE scheme and show that it satisfies the strong notion of CCA security.
The construction relies on a quantum-secure pseudorandom function
\[
  \PRF\colon \{0,1\}^\lambda \times \{0,1\}^\lambda \to \{0,1\}^{3\lambda}
\]
which, as mentioned earlier in \cref{subsec:prf}, can be constructed from any quantum-secure one-way function.
Then our quantum PKE scheme~$\Gamma=(\Gen,\Enc,\Dec)$ is defined as follows:
\begin{itemize}
\item The key generation algorithm $\Gen(1^\lambda)$ samples two keys $k_0 \xleftarrow{\$} \{0,1\}^\lambda$ and $k_1 \xleftarrow{\$} \{0,1\}^\lambda$, then it prepares the states
\begin{align*}
  \ket{\pk_0} = \sum_{x\in \{0,1\}^{\lambda}} \ket{x,\PRF_{k_0}(x)}
\quad\text{and}\quad
  \ket{\pk_1} = \sum_{x\in \{0,1\}^{\lambda}} \ket{x,\PRF_{k_1}(x)}.
\end{align*}
Note that both states are efficiently computable since the $\PRF$ can be efficiently evaluated in superposition in view of \cref{thm:quantum}.
The quantum public key is then given by the pure state $\ket\pk = \ket{\pk_0} \ot \ket{\pk_1}$, whereas the classical secret key consists of the pair $\sk = (k_0,k_1)$.

\item Given a message $\pt \in \{0,1\}$, the encryption algorithm $\Enc(\ket\pk, \pt)$ simply measures $\ket{\pk_{\pt}}$ in the computational basis, and outputs the measurement outcome as the \emph{classical} ciphertext $\ct=(x,y)$.

\item Given the ciphertext $\ct=(x,y)$, the decryption algorithm $\Dec(\sk, \ct)$ first checks whether $\PRF_{k_0}(x) = y$ and returns $0$ if this is the case.
Next, it checks whether $\PRF_{k_1}(x) = y$ and returns $1$ in this case.
Finally, if neither is the case, the decryption algorithm returns $\bot$.
\end{itemize}

Next, we establish correctness of this scheme.

\begin{theorem}\label{thm:correctness}
If $\PRF$ is a quantum-secure one-way function, then the quantum PKE scheme $\Gamma$ is correct.
\end{theorem}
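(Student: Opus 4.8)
The plan is to reduce correctness to the collision probability of two independent PRF instances and then replace the PRFs by truly random functions using quantum security. First I would dispose of the $\pt=0$ case: measuring $\ket{\pk_0}$ always yields a pair $\ct=(x,\PRF_{k_0}(x))$, so the very first check performed by $\Dec$, namely $\PRF_{k_0}(x)=y$, succeeds with certainty and $\Dec$ outputs $0$. Hence the correctness error for $\pt=0$ is exactly zero, and it remains to bound the error for $\pt=1$. For $\pt=1$ the ciphertext is $\ct=(x,\PRF_{k_1}(x))$ for a uniformly random $x\in\{0,1\}^\lambda$ (the measurement outcome of $\ket{\pk_1}$), and $\Dec$ returns the correct value $1$ unless the first check spuriously passes, i.e. unless $\PRF_{k_0}(x)=\PRF_{k_1}(x)$, in which single bad event it returns $0$. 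Therefore the correctness error equals
\[
  \epsilon = \Pr_{k_0,k_1,x}\bigl[\PRF_{k_0}(x)=\PRF_{k_1}(x)\bigr],
\]
where the probability is over $k_0,k_1\xleftarrow{\$}\{0,1\}^\lambda$ and uniform $x$. Note that $\ket{\pk_0}$ plays no role in this experiment, which simplifies the analysis.

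Next I would bound $\epsilon$ by a two-step hybrid argument that replaces each keyed function by a truly random function $f_0,f_1\colon\{0,1\}^\lambda\to\{0,1\}^{3\lambda}$. In the fully idealized experiment $\Pr_{f_0,f_1,x}[f_0(x)=f_1(x)]=2^{-3\lambda}$, since for every fixed $x$ the values $f_0(x)$ and $f_1(x)$ are independent and uniform on $\{0,1\}^{3\lambda}$. Each hybrid step is justified by quantum PRF security. A distinguisher $B$ swapping $\PRF_{k_0}$ for $f_0$ samples $k_1$ itself, prepares and measures $\ket{\pk_1}$ to obtain $(x,y)$, makes a single \emph{classical} query to its oracle on input $x$, and outputs $1$ iff the response equals $y$; this gives $\Pr[B^{\PRF_{k_0}}=1]=\epsilon$ and $\Pr[B^{f_0}=1]=\epsilon'$ for the intermediate quantity $\epsilon'=\Pr_{f_0,k_1,x}[f_0(x)=\PRF_{k_1}(x)]$. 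A second distinguisher $B'$ swapping $\PRF_{k_1}$ for $f_1$ uses its oracle $O$ to build $\ket{\pk_1}=\sum_x\ket{x,O(x)}$, measures to get $(x,y)$, and outputs $1$ iff a freshly sampled uniform $3\lambda$-bit string (simulating $f_0(x)$) equals $y$; this gives $\Pr[B'^{\PRF_{k_1}}=1]=\epsilon'$ and $\Pr[B'^{f_1}=1]=2^{-3\lambda}$. Since each random function is evaluated only at the single measured point, it is simulated perfectly by fresh randomness. Collecting the bounds yields $\epsilon\le 2^{-3\lambda}+2\,\negl(\lambda)=\negl(\lambda)$, which is the claimed negligible correctness error.

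The main subtlety, and the reason that quantum-secure (rather than merely classically secure) PRFs are needed, lies in the second hybrid: the distinguisher $B'$ must prepare the superposition $\ket{\pk_1}$, which forces a \emph{superposition} query to the oracle that is being swapped between $\PRF_{k_1}$ and $f_1$. Indistinguishability must therefore hold against an adversary making quantum queries, which is exactly the guarantee of the quantum PRF security definition; the first hybrid, by contrast, needs only a classical query. The remaining points are routine: checking that $B$ and $B'$ are QPT, verifying that the single classical query of $B$ and the single simulated evaluation inside $B'$ faithfully implement the corresponding hybrids, and confirming that the three probabilities compose to a negligible total.
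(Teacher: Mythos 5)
Your proof is correct, and it follows the same high-level strategy as the paper (swap the PRFs for truly random functions, then bound a collision probability), but the event you analyze is different and in fact sharper. The paper bounds the probability that the \emph{entire ranges} of $f_0$ and $f_1$ intersect (at most $2^{\lambda}\cdot 2^{\lambda}/2^{3\lambda}=2^{-\lambda}$ by a union bound), which is a sufficient condition for perfect correctness but is not an efficiently testable event, so its ``standard argument'' for passing to random functions is left implicit. You instead isolate the exact failure event --- a collision $\PRF_{k_0}(x)=\PRF_{k_1}(x)$ at the single measured point $x$ --- which both gives a tighter bound ($2^{-3\lambda}$, no union bound) and makes the reduction to PRF security genuinely efficient (a one-query distinguisher that outputs $1$ iff the check passes). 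That is a cleaner decomposition than the paper's. Two small quibbles with your discussion rather than your proof: first, your second hybrid is vacuous --- once $\PRF_{k_0}$ is replaced by $f_0$, the value $f_0(x)$ is already uniform and independent of $\PRF_{k_1}(x)$, so $\epsilon'=2^{-3\lambda}$ exactly and there is nothing left to replace; second, the claim that $B'$ is ``forced'' to make a superposition query is not right, since measuring $\sum_x\ket{x,O(x)}$ in the computational basis is distributionally identical to sampling $x$ uniformly and making one classical query, so classical PRF security would in fact suffice for correctness (quantum security is only genuinely needed in the CCA proof, where the adversary holds the coherent public-key states). Neither point affects the validity of your argument.
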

\begin{proof}
Observe that the scheme is perfectly correct if the ranges of $\PRF_{k_0}$ and $\PRF_{k_1}$ are disjoint.
By a standard argument, we can instead analyze the case of two truly random functions~$f_0$ and~$f_1$, and the same will hold for~$\PRF_{k_0}$ and~$\PRF_{k_1}$, except on a negligible fraction of the inputs.
Fix the range of~$f_0$, which is of size at most~$2^\lambda$.
Then the probability that any given element of~$f_1$ falls into the same set is at most $2^{-2\lambda}$, and the desired statement follows by a union bound.
\end{proof}

Finally, we show that the scheme is CCA-secure.

\begin{theorem}\label{thm:CCA}
If $\PRF$ is a quantum-secure one-way function, then the quantum PKE scheme $\Gamma$ is CCA-secure.
\end{theorem}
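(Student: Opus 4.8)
The plan is to first replace the two pseudorandom functions by genuinely random functions, and then to show that, with random functions, the challenge ciphertext hides the secret bit up to negligible error, using the one-way-to-hiding lemma (\cref{lemma:o2h}) to control the adversary's decryption queries. The first step is routine: the entire CCA experiment can be run by a QPT algorithm that is given only oracle access to $\PRF_{k_0}$ and $\PRF_{k_1}$. Indeed, each public key $\ket{\pk_0}^{\otimes n},\ket{\pk_1}^{\otimes n}$ is prepared by evaluating the respective PRF in superposition (\cref{thm:quantum}); every \emph{classical} query $(x,y)$ to $\Dec$ or $\Dec^*$ is answered by evaluating $\PRF_{k_0}(x)$ and $\PRF_{k_1}(x)$ and comparing with $y$; and the challenge is obtained by measuring a fresh copy of $\ket{\pk_b}$. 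Letting this algorithm output $1$ exactly when $\adv$ guesses $b$ yields a QPT oracle distinguisher, so a two-step hybrid---replacing first $\PRF_{k_0}$ by a uniform $f_0$ and then $\PRF_{k_1}$ by a uniform $f_1$---shows by quantum PRF security that the winning probability changes only negligibly. From here on I work with independent uniformly random $f_0,f_1\colon\{0,1\}^\lambda\to\{0,1\}^{3\lambda}$, and, exactly as in the proof of \cref{thm:correctness}, I condition on the overwhelmingly likely event that their ranges are disjoint (so that $f_0(x)\neq f_1(x)$ for all $x$).

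Next I would decouple the challenge from the secret bit. Measuring $\ket{\pk_b}$ is the same as sampling $x^\ast\gets\{0,1\}^\lambda$ and setting the challenge to $\ct=(x^\ast,f_b(x^\ast))$, so I may sample $x^\ast$ at the very start. Writing $p_b=\Pr[\adv\text{ outputs }1]$ in the experiment with secret bit $b$, the winning probability equals $\tfrac12+\tfrac12(p_1-p_0)$, so it suffices to show $\abs{p_0-p_1}=\negl(\lambda)$. To this end I introduce a hybrid $\mathsf{M}$ in which the challenge value $f_b(x^\ast)$ is replaced by a fresh uniform string $\tilde y\in\{0,1\}^{3\lambda}$, while the public keys and the decryption oracle are left on the original $f_0,f_1$ and $\Dec^\ast$ now rejects $(x^\ast,\tilde y)$. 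Crucially, $\mathsf{M}$ never references $b$: the challenge, the public keys, and the oracle are all independent of it, so $\adv$'s output is a fixed number $q$ and it suffices to prove $\abs{p_b-q}=\negl(\lambda)$ for each $b$.

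For fixed $b$, the experiment $\mathsf{G}_b$ and the hybrid $\mathsf{M}$ differ in only two places: the classical challenge value ($f_b(x^\ast)$ versus $\tilde y$), and the value of the decryption oracle at the \emph{single} point $(x^\ast,f_b(x^\ast))$, which decrypts to $b$ in $\mathsf{M}$ (since there it is no longer the challenge) but to $\bot$ in $\mathsf{G}_b$ (since there it is rejected). Note that the public keys are \emph{identical} in the two experiments, so there is no public-key discrepancy to account for. These differences are exactly what \cref{lemma:o2h} is designed for: I would cast them as a reprogramming of $f_b$ at $x^\ast$, so that $\abs{p_b-q}\le 2d\sqrt{P_{\text{guess}}}$ with $d=\mathrm{poly}(\lambda)$ the query depth and $P_{\text{guess}}$ the probability that measuring a random query round reveals the reprogrammed point.

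The main obstacle---and the step demanding the most care---is bounding $P_{\text{guess}}$, precisely because of the decryption oracle. One must \emph{not} reprogram $f_b$ as a function on $\{0,1\}^\lambda$: after receiving the challenge, $\adv_1$ knows $x^\ast$ and may legally query $\Dec^\ast$ at inputs $(x^\ast,\cdot)$, each of which internally evaluates $f_b(x^\ast)$, so the query-input register would contain $x^\ast$ with probability near $1$ and the bound would be vacuous. The fix is to treat the decryption oracle itself---a function on $\{0,1\}^\lambda\times\{0,1\}^{3\lambda}$, accessed only through the equality test ``$y \overset{?}{=} f_b(x^\ast)$''---as the reprogrammed object, with reprogrammed set $S=\{(x^\ast,f_b(x^\ast))\}$. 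Then detecting the change requires querying $\Dec^\ast$ with the second coordinate equal to $f_b(x^\ast)$, a uniform $3\lambda$-bit string; once the challenge value has been switched to $\tilde y$, this string is hidden from $\adv$ apart from the $O(2^{-\lambda})$ amplitude the public-key states place on the $x^\ast$ branch, and the one query that would reveal it---on the challenge itself---is exactly the query $\Dec^\ast$ refuses to answer. This is where the CCA restriction is indispensable. Establishing $P_{\text{guess}}=\negl(\lambda)$ this way gives $\abs{p_b-q}=\negl(\lambda)$ for each $b$, hence $\abs{p_0-p_1}=\negl(\lambda)$, and combining with the PRF step bounds the overall winning probability by $1/2+\negl(\lambda)$.
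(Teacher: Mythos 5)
Your proposal follows the same overall strategy as the paper: switch the PRFs to truly random functions via quantum PRF security, then use the one-way-to-hiding lemma to argue that decoupling the challenge from the secret bit changes the adversary's output only negligibly. The paper organizes this as a chain of six hybrids from $b=0$ to $b=1$ (switching one PRF at a time and passing through a fully uniform challenge), whereas you switch both PRFs up front and route both experiments through a single $b$-independent middle hybrid; that difference is cosmetic. The substantive divergence is in how the O2H lemma is applied. The paper reprograms $f$ itself at the challenge point $x$ (setting $H(x)=y$ uniform, $S=\{x\}$) and bounds $P_{\text{guess}}\le (n+Q)/2^\lambda$ by asserting that post-challenge decryption queries never touch $x$; as you observe, this assertion is stronger than what \cref{def:cca} literally guarantees, since $\Dec^*$ rejects only the exact challenge ciphertext $(x,y)$ and the adversary may query $(x,y')$ with $y'\neq y$, forcing the simulator to evaluate the oracle at $x$. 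Your fix --- reprogramming the decryption oracle as a function on $\{0,1\}^{\lambda}\times\{0,1\}^{3\lambda}$ at the single point $(x^*,f_b(x^*))$, so that detection requires producing a uniform $3\lambda$-bit string --- addresses exactly this and is the more robust version of the argument. The price is that \cref{lemma:o2h} as stated takes a classical auxiliary input $z$, while in your setup the adversary holds quantum states $\ket{\pk_b}^{\ot n}$ carrying $2^{-\lambda}$-weight information about $f_b(x^*)$, and the challenge string in $\mathsf{G}_b$ is itself the reprogrammed value; to close this you would either invoke a version of O2H tolerating (quantum) side information, or first argue that the public keys can be replaced, up to negligible trace distance, by states independent of $f_b(x^*)$. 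These details are routine but must be supplied; with them, your route goes through and in fact repairs the one loose step in the paper's own argument.
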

\begin{proof}
It suffices to show that the CCA experiment with the bit~$b$ fixed to~$0$ is indistinguishable from the same experiment but with~$b$ fixed to~$1$.
To this end we consider a series of hybrids, starting with the former and ending with the latter:
\begin{itemize}
    \item \textbf{Hybrid 0:} This is the original CCA experiment except that the bit $b$ fixed to $0$.
    \item \textbf{Hybrid 1:} In this (inefficient) hybrid, we modify hybrid~0 to instead compute $\ket{\pk_0}$ as
    \[
    \ket{\pk_0} = \sum_{x\in \{0,1\}^{\lambda}} \ket{x,f(x)},
    \]
    where $f$ is a truly uniformly random function.
\end{itemize}
The indistinguishability between these two hybrids follows by a standard reduction against the quantum security of~$\PRF$:
To simulate the desired $n$ copies of $\ket{\pk_0}$, and to answer decryption queries (except the one that contains the challenge ciphertext), the reduction simply queries the oracle provided by the $\PRF$ security experiment (possibly in superposition). Note that whenever the oracle implements $\PRF$, then the view of the distinguisher is identical to hybrid $0$, whereas if the oracle implements a truly random function, then the view of the distinguisher is identical to hybrid $1$.

\begin{itemize}
    \item \textbf{Hybrid 2:} In this (inefficient) hybrid, we modify hybrid~1 such that the challenge ciphertext is sampled as
    \[
    x \xleftarrow{\$} \{0,1\}^{\lambda} \quad\text{and}\quad y \xleftarrow{\$} \{0,1\}^{3\lambda}.
    \]
\end{itemize}
The indistinguishability of hybrids~1 and~2 follows from the one-way to hiding lemma (\cref{lemma:o2h}). Let $H$ be such that $H(x) = y$ and for all $x'\neq x$ we set $H(x') = f(x')$, and let $S = \{x\}$. Let $\alice$ be the adversary playing the security experiment. We claim that $\alice^f$ is the adversary playing in hybrid~$1$ whereas $\alice^H$ corresponds to the adversary playing hybrid~$2$: Observe that the public keys can be simulated with oracle access to $f$ ($H$, respectively) by simply querying on a uniform superposition of the input domain, whereas the decryption queries can be simulated by query basis states. Importantly, for all queries after the challenge phase, the adversary is not allowed to query $x$ to $\Dec^*$. Hence the set $T$, collected by $\bob$ is a set of at most $n$ uniform elements from the domain of $f$, along with $Q$ basis states, where $Q$ denotes the number of queries made by the adversary to the decryption oracle \emph{before} the challenge ciphertext is issued. By a union bound
\[P_{\text{guess}} =\Pr[T \cap \{x\} \neq \emptyset] \leq \frac{(n+Q)}{2^\lambda}=\negl(\lambda)\]
since $x$ is uniformly sampled. Applying~\cref{lemma:o2h}, we deduce that $|P_{\text{left}}-P_{\text{right}}|$ is also negligible, i.e., which bounds the distance between the two hybrids.

\begin{itemize}
    \item \textbf{Hybrid 3:} In this (efficient) hybrid, we modify hybrid~2 to compute $\ket{\pk_0}$ by using the pseudorandom function~$\PRF_{k_0}$ instead of the truly random function~$f$.
    That is, we revert the change done in hybrid~$1$.
\end{itemize}
Indistinguishability follows from the same argument as above.

\begin{itemize}
    \item \textbf{Hybrid 4:} In this (inefficient) hybrid, we modify hybrid~3 to compute $\ket{\pk_1}$ as
    \[
    \ket{\pk_1} = \sum_{x\in \{0,1\}^{\lambda}} \ket{x,f(x)}
    \]
    where $f$ is a truly uniformly random function.
\end{itemize}
Indistinguishability follows from the same argument as above.

\begin{itemize}
    \item \textbf{Hybrid 5:} In this (inefficient) hybrid, we modify hybrid~4 by fixing the bit $b$ to $1$ and computing the challenge ciphertext honestly, i.e., as
        \[
    x \xleftarrow{\$} \{0,1\}^{\lambda} \quad\text{and}\quad y =f(x).
    \]
\end{itemize}
Indistinguishability follows from the same argument as above.

\begin{itemize}
    \item \textbf{Hybrid 6:} In this (efficient) hybrid, we modify hybrid~5 to compute $\ket{\pk_1}$ by using the pseudorandom function~$\PRF_{k_1}$ instead of the truly random function~$f$.
    That is, we revert the change done in hybrid~$4$.
\end{itemize}
Indistinguishability follows from the same argument as above. The proof is concluded by observing that the last hybrid is identical to the CCA experiment with the bit~$b$ fixed to $1$.
\end{proof}

\bibliographystyle{splncs04}
\bibliography{ref}

\begin{thebibliography}{10}
\providecommand{\url}[1]{\texttt{#1}}
\providecommand{\urlprefix}{URL }
\providecommand{\doi}[1]{https://doi.org/#1}

\bibitem{O2H}
Ambainis, A., Hamburg, M., Unruh, D.: Quantum security proofs using
  semi-classical oracles. In: Boldyreva, A., Micciancio, D. (eds.) Advances in
  Cryptology - {CRYPTO} 2019 - 39th Annual International Cryptology Conference,
  Santa Barbara, CA, USA, August 18-22, 2019, Proceedings, Part {II}. Lecture
  Notes in Computer Science, vol. 11693, pp. 269--295. Springer (2019).
  \doi{10.1007/978-3-030-26951-7\_10},
  \url{https://doi.org/10.1007/978-3-030-26951-7\_10}

\bibitem{DBLP:journals/sigact/BennettB87}
Bennett, C.H., Brassard, G.: Quantum public key distribution reinvented.
  {SIGACT} News  \textbf{18}(4),  51--53 (1987). \doi{10.1145/36068.36070},
  \url{https://doi.org/10.1145/36068.36070}

\bibitem{TP}
Bennett, C.H., Brassard, G., Cr{\'e}peau, C., Jozsa, R., Peres, A., Wootters,
  W.K.: Teleporting an unknown quantum state via dual classical and
  einstein-podolsky-rosen channels. Physical review letters  \textbf{70}(13),
  ~1895 (1993)

\bibitem{boneh2013secure}
Boneh, D., Zhandry, M.: Secure signatures and chosen ciphertext security in a
  quantum computing world. In: Annual cryptology conference. pp. 361--379.
  Springer (2013)

\bibitem{buhrman2001quantum}
Buhrman, H., Cleve, R., Watrous, J., De~Wolf, R.: Quantum fingerprinting.
  Physical Review Letters  \textbf{87}(16),  167902 (2001)

\bibitem{carstens2021relationships}
Carstens, T.V., Ebrahimi, E., Tabia, G.N., Unruh, D.: Relationships between
  quantum ind-cpa notions. In: Theory of Cryptography Conference. pp. 240--272.
  Springer (2021)

\bibitem{dolev1991non}
Dolev, D., Dwork, C., Naor, M.: Non-malleable cryptography. In: Proceedings of
  the twenty-third annual ACM symposium on Theory of computing. pp. 542--552
  (1991)

\bibitem{2QPC}
Dupuis, F., Nielsen, J.B., Salvail, L.: Secure two-party quantum evaluation of
  unitaries against specious adversaries. In: Rabin, T. (ed.) Advances in
  Cryptology - {CRYPTO} 2010, 30th Annual Cryptology Conference, Santa Barbara,
  CA, USA, August 15-19, 2010. Proceedings. Lecture Notes in Computer Science,
  vol.~6223, pp. 685--706. Springer (2010).
  \doi{10.1007/978-3-642-14623-7\_37},
  \url{https://doi.org/10.1007/978-3-642-14623-7\_37}

\bibitem{gagliardoni2016semantic}
Gagliardoni, T., H{\"u}lsing, A., Schaffner, C.: Semantic security and
  indistinguishability in the quantum world. In: Annual international
  cryptology conference. pp. 60--89. Springer (2016)

\bibitem{DBLP:journals/jacm/GoldreichGM86}
Goldreich, O., Goldwasser, S., Micali, S.: How to construct random functions.
  J. {ACM}  \textbf{33}(4),  792--807 (1986). \doi{10.1145/6490.6503},
  \url{https://doi.org/10.1145/6490.6503}

\bibitem{gottesman}
Gottesmann, D.: Quantum public-key cryptography with information-theoretic
  security.
  \url{https://www2.perimeterinstitute.ca/personal/dgottesman/Public-key.ppt}

\bibitem{DBLP:conf/stoc/ImpagliazzoR89}
Impagliazzo, R., Rudich, S.: Limits on the provable consequences of one-way
  permutations. In: Johnson, D.S. (ed.) Proceedings of the 21st Annual {ACM}
  Symposium on Theory of Computing, May 14-17, 1989, Seattle, Washington,
  {USA}. pp. 44--61. {ACM} (1989). \doi{10.1145/73007.73012},
  \url{https://doi.org/10.1145/73007.73012}

\bibitem{DBLP:conf/crypto/JiL018}
Ji, Z., Liu, Y., Song, F.: Pseudorandom quantum states. In: Shacham, H.,
  Boldyreva, A. (eds.) Advances in Cryptology - {CRYPTO} 2018 - 38th Annual
  International Cryptology Conference, Santa Barbara, CA, USA, August 19-23,
  2018, Proceedings, Part {III}. Lecture Notes in Computer Science, vol. 10993,
  pp. 126--152. Springer (2018). \doi{10.1007/978-3-319-96878-0\_5},
  \url{https://doi.org/10.1007/978-3-319-96878-0\_5}

\bibitem{DBLP:journals/iacr/KawachiKNY06}
Kawachi, A., Koshiba, T., Nishimura, H., Yamakami, T.: Computational
  indistinguishability between quantum states and its cryptographic
  application. In: Cramer, R. (ed.) Advances in Cryptology - {EUROCRYPT} 2005,
  24th Annual International Conference on the Theory and Applications of
  Cryptographic Techniques, Aarhus, Denmark, May 22-26, 2005, Proceedings.
  Lecture Notes in Computer Science, vol.~3494, pp. 268--284. Springer (2005).
  \doi{10.1007/11426639\_16}, \url{https://doi.org/10.1007/11426639\_16}

\bibitem{DBLP:conf/tqc/Kretschmer21}
Kretschmer, W.: Quantum pseudorandomness and classical complexity. In: Hsieh,
  M. (ed.) 16th Conference on the Theory of Quantum Computation, Communication
  and Cryptography, {TQC} 2021, July 5-8, 2021, Virtual Conference. LIPIcs,
  vol.~197, pp. 2:1--2:20. Schloss Dagstuhl - Leibniz-Zentrum f{\"{u}}r
  Informatik (2021). \doi{10.4230/LIPIcs.TQC.2021.2},
  \url{https://doi.org/10.4230/LIPIcs.TQC.2021.2}

\bibitem{MY22}
Morimae, T., Yamakawa, T.: One-wayness in quantum cryptography. Cryptology
  ePrint Archive, Paper 2022/1336 (2022),
  \url{https://eprint.iacr.org/2022/1336},
  \url{https://eprint.iacr.org/2022/1336}

\bibitem{MS09}
Myers, S.A., Shelat, A.: Bit encryption is complete. In: 50th Annual {IEEE}
  Symposium on Foundations of Computer Science, {FOCS} 2009, October 25-27,
  2009, Atlanta, Georgia, {USA}. pp. 607--616. {IEEE} Computer Society (2009).
  \doi{10.1109/FOCS.2009.65}, \url{https://doi.org/10.1109/FOCS.2009.65}

\bibitem{DBLP:books/daglib/0046438}
Nielsen, M.A., Chuang, I.L.: Quantum Computation and Quantum Information (10th
  Anniversary edition). Cambridge University Press (2016),
  \url{https://www.cambridge.org/de/academic/subjects/physics/quantum-physics-quantum-information-and-quantum-computation/quantum-computation-and-quantum-information-10th-anniversary-edition?format=HB}

\bibitem{DBLP:journals/jacm/Zhandry21}
Zhandry, M.: How to construct quantum random functions. J. {ACM}
  \textbf{68}(5),  33:1--33:43 (2021). \doi{10.1145/3450745},
  \url{https://doi.org/10.1145/3450745}

\end{thebibliography}

\end{document}